\newcommand{\Enote}[1]{\begin{center}\fbox{\begin{minipage}{20em}
                        {{\bf Emmanuel Note:} {#1}} \end{minipage}}\end{center}}
\newcommand{\Ynote}[1]{\begin{center}\fbox{\begin{minipage}{20em}
                        {{\bf Yuval Note:} {#1}} \end{minipage}}\end{center}}
\newcommand\independent{\protect\mathpalette{\protect\independent}{\perp}} 
\def\independent#1#2{\mathrel{\rlap{$#1#2$}\mkern2mu{#1#2}}}
\newcommand{\pg}{\mathrm{girth_*}}
\newcommand{\argmax}{\mathrm{argmax}}
\newcommand{\mR}{\mathbb{R}}
\newcommand{\pp}{\mathbb{P}}
\newcommand{\e}{\varepsilon}
\newcommand{\D}{\mathcal{D}}
\newtheorem{thm}{Theorem}
\newtheorem{lem}{Lemma}
\theoremstyle{definition}
\newtheorem{remark}{Remark}
\newtheorem{Def}{Definition}
\newtheorem{ex}{Example}
\DeclareMathOperator{\girth}{girth}
\DeclareMathOperator{\rank}{rank}
\newcommand{\pr}{\pp}
\newcommand{\E}{\mathbb{E}}
\newcommand{\F}{\mathbb{F}}
\begin{document}

\title{High-Girth Matrices and Polarization}
\author{\IEEEauthorblockN{Emmanuel Abbe}
\IEEEauthorblockA{Prog.\ in Applied and Computational Math.\ and EE Dept.\\
Princeton University\\
Email: \url{eabbe@princeton.edu}}
\and
\IEEEauthorblockN{Yuval Wigderson}
\IEEEauthorblockA{Department of Mathematics\\
Princeton University\\
Email: \url{yuvalw@princeton.edu}}
}
\date{}

\maketitle

\begin{abstract}
The girth of a matrix is the least number of linearly dependent columns, in contrast to the rank which is the largest number of linearly independent columns. This paper considers the construction of {\it high-girth} matrices, whose probabilistic girth is close to its rank. Random matrices can be used to show the existence of high-girth matrices with constant relative rank, but the construction is non-explicit. This paper uses a polar-like construction to obtain a deterministic and efficient construction of high-girth matrices for arbitrary fields and relative ranks. Applications to coding and sparse recovery are discussed. 
\end{abstract}

\section{Introduction}
Let $A$ be a matrix over a field $\F$. Assume that $A$ is flat, i.e., it has more columns than rows. The rank of $A$, denoted by $\rank(A)$, is the maximal number of linearly independent columns. The girth of $A$, denoted by $\girth(A)$, is the least number of linearly dependent columns. What are the possible tradeoffs between $\rank(A)$ and $\girth(A)$? This depends on the cardinality of the field. 
It is clear that 
\begin{align}
\girth(A) \leq \rank(A) + 1. \label{singleton}
\end{align}
Is it possible to have a {\it perfect-girth matrix} that achieves this upper-bound? If $\F=\mR$, drawing the matrix with i.i.d.\ standard Gaussian entries gives such an example with probability 1. However, if $\F=\F_q$, where $q$ is finite, the problem is different. 
For $\F=\F_q$, note that 
\begin{align}
\girth(A)= \mathrm{dist}(C_A), \label{g-d}
\end{align}
where $\mathrm{dist}(C_A)$ is the distance of the $q$-ary linear code $C_A$ whose parity check matrix is $A$. In fact, the least number of columns that are linearly dependent in $A$ is equal to the least number of columns whose linear combination can be made 0, which is equal to the least weight of a vector that is mapped to 0 by $A$, which is the least weight of a codeword, i.e., the code distance since the code is linear. 

Hence, over finite fields, the girth is a key parameter for error-correcting codes, and studying the girth/rank tradeoffs for matrices is equivalent to studying the distance/dimension tradeoffs for linear codes. Clearly it is not possible to obtain perfect-girth matrices over $\F=\F_2$, even if we relax the perfect-girth requirement to be asymptotic, requiring $\rank(A) \sim \girth(A)$ when the number of columns in $A$ tends to infinity.\footnote{We use the notation $a_n \sim b_n$ for $\lim_{n \to \infty} a_n/b_n =1$.}  If $\F=\F_2$, the Gilbert-Varshamov bound provides a lower-bound on the maximal girth (conjectured to be tight by some). Namely, for a uniformly drawn matrix $A$ with $n$ columns, with high probability,  
\begin{align}
\rank(A) = n H(\girth(A)/n) + o(n),
\end{align}
where $H$ is the binary entropy function. 

For $\F=\F_q$, the bound in \eqref{singleton} is a restatement of the Singleton bound for linear codes and expressed in terms of the co-dimension of the code. Asking for a perfect-girth matrix is hence equivalent to asking for an MDS linear code. Such constructions are known when $q=n$ with Reed-Solomon codes. Note that the interest on MDS codes has recently resurged with the applications in distributed data storage, see \cite{dimakis} for a survey. 

One may consider instead the case of non-finite fields, typically not covered in coding theory. As shown in Section \ref{sparse-sec}, this is relevant for the recovery of sparse signals \cite{donoho} via compressed measurements. The girth is then sometimes called differently, such as the Kruskal-rank or spark \cite{donoho}. As stated above, for $\F=\mR$, a random Gaussian matrix is perfect-girth with probability one. However, computing the girth of an arbitrary matrix is NP-hard \cite{tillman} (like computing the distance of a code \cite{vardy}), making the latter construction non-explicit.

In this paper, we are mainly interested in the following notion of {\it probabilistic girth}, defined to be the least number of columns that are linearly dependent with high probability, when drawing the columns uniformly at random. Formal definitions are given in the next section. Going from a worst-case to a probabilistic model naturally allows for much better bounds. In particular, defining {\it high-girth matrices} as matrices whose probabilistic girth and rank are of the same order (up to $o(n)$), a random uniform matrix proves the existence of high-girth matrices even for $\F=\F_2$. However, obtaining an explicit construction is again non-trivial. 

In this paper, we obtain explicit and fully deterministic constructions of high-girth matrices over any fields and for any relative ranks. We rely on a polar-code-like construction. Starting with the same squared matrix as for polar or Reed-Muller codes, i.e., the tensor-product/Sierpinski matrix, we then select rows with a different measure based on ranks. For finite fields, we show that high-girth matrices are equivalent to capacity-achieving linear codes for erasure channels, while for errors the speed of convergence of the probabilistic girth requirement matters. In particular, we achieve the Bhattacharyya bound with our explicit construction. For the real field, this allows to construct explicit binary measurement matrices with optimal probabilistic girth. 

These results have various other implications. First, our construction gives an operational interpretation to the upper-bound of the Bhattacharyya-process in polar codes. When the channel is not the BEC, the upper-bound of this process used in the polar code literature is in fact the conditional rank process studied in this paper. 
Second, this paper gives a high-school level proof (not necessarily trivial but relying only basic linear algebra concepts) of a fully deterministic, efficient, and capacity-achieving code for erasure channels.  While capacity-achieving codes for the BEC are well-known by now, most constructions rely still on rather sophisticated tools (expander codes, polar codes, LDPC codes, spatially-coupled codes), and we felt that an explicit construction relying only on the notion of rank and girth is rather interesting. On the other hand, for $\F=\F_2$, our construction turns out to be equivalent to the polar code for the BEC, so that the difference is mainly about the approach. It allows however to simplify the concepts, not requiring even the notion of mutual information.  Finally, we expect the result to generalize to non-binary alphabets, given that our construction does depend on the underlying field. 

\section{High-Girth Matrices}
\subsection{Notation}
Let $A$ be a $m \times n$ matrix over a field $\F$. For any set $S \subseteq [n]$, let $A[S]$ be the submatrix of $A$ obtained by selecting the columns of $A$ indexed by $S$. For $s \in [0,1]$, let $A[s]$ be a random submatrix of $A$ obtained by sampling each column independently with probability $s$. Thus, $A[s]=A[{\widetilde S}]$, where $\widetilde S$ is an i.i.d.\ Ber$(s)$ random subset of $[n]$. In expectation, $A[s]$ has $sn$ columns. Throughout the paper, an event $E_n$ takes place with high probability if $\pp\{E_n\} \to 1$ when $n \to \infty$, where $n$ should be clear from the context. 

\subsection{Probabilistic Girth}

\begin{Def}
Let $\{A_n\}$ be a sequence of matrices over a field $\F$, where $A_n$ has $n$ columns.
The {\it probabilistic girth} of $A_n$ is the supremum of all $s \in [0,1]$ such that $A_n[s]$ has linearly independent columns with high probability, i.e., 
\begin{align}
&\pg(\{A_n\}) :=\\\notag
&\sup \{s \in [0,1]: \pp \{A_n[s]\text{ has lin. indep. cols.}\}=1-o(1) \}
\end{align}
\end{Def}
Note that a better name would have been the probabilistic {\it relative} girth, since it is a counterpart of the usual notion of girth in the probabilistic setting with in addition a normalization factor by $n$. We often write $\pg(A_n)$ instead of $\pg(\{A_n\})$. We will sometimes care about how fast the above probability tends to 1. We then say that $A_n$ has a probabilistic girth with rate $\tau(n)$ if the above definition holds when
\begin{align}
\pp\{ A_n[s] \text{ has lin. indep. columns} \} = 1 - \tau(n).
\end{align}


\begin{Def} 
We say that $A_n$ is {\it high-girth} if 
\begin{align}
\pg(A_n) = \limsup_{n \to \infty} \rank(A_n)/n.
\end{align} 
For $\mu \in [0,1]$, we say that $A_n$ is $\mu$-high-girth if it is high-girth and $\pg(A_n)=\mu$. 
\end{Def}

\begin{ex} Consider the following construction, corresponding to Reed-Solomon codes. 
Let $x_1,\ldots,x_n$ be distinct elements of a field $\F$, and consider the $m \times n$ matrix
\begin{align}
V=\begin{pmatrix}
1&1&1&\cdots&1\\
x_1&x_2&x_3&\cdots&x_n\\
x_1^2&x_2^2&x_3^2&\cdots&x_n^2\\
\vdots&\vdots&\vdots&\ddots&\vdots\\
x_1^{m-1}&x_2^{m-1}&x_3^{m-1}&\cdots&x_n^{m-1}
\end{pmatrix}
\end{align}
Then $V$ will satisfy a stronger property than being high-girth, as its actual girth is $m+1$: \emph{every} $m \times m$ submatrix will be invertible, since every $m \times m$ submatrix is a Vandermonde matrix whose determinant must be nonzero. 
However, this example cannot be used to construct high-girth families over a fixed finite field $\F$. For as soon as $n$ exceeds the size of $\F$, it will be impossible to pick distinct $x_i$'s, and we will no longer have high girth.
\end{ex}

\begin{ex}
A $\mu n \times n$ uniform random matrix with entries in $\F_2$ is $\mu$-high-girth with high probability. 
\end{ex}

\section{Explicit construction of high-girth matrices}
\subsection{Sierpinski matrices}
Let $\F$ be any field, let $n$ be a power of $2$, and let $G_n$ be the matrix over $\F$ defined by
\begin{align}
G_n=\begin{pmatrix}
1&1\\0&1
\end{pmatrix}^{\otimes \log n}.
\end{align}
Note that the entries of this matrix are only $0$'s and $1$'s, hence this can be viewed as a matrix over any field. 

Many important codes can be derived from $G_n$, and they are all based on a simple idea. Namely, we first pick some measure of ``goodness'' on the rows of $G_n$. Then, we take the submatrix of $G_n$ obtained by keeping only those rows which are the ``best'' under this metric, and we finally define a code whose PCM is this matrix. The 
first important examples are Reed-Muller (RM) codes \cite{reed,muller}, where goodness is measured by the weight of the rows, and more recently polar codes \cite{arichannel,arisource}, where goodness is measured by the entropy (or mutual information). In the next section, we define a measure of goodness based on ranks and use it to construct high-girth matrices. A similar construction was proposed in \cite{renyipol} for Hadamard matrices to polarize the R\'enyi information dimension. We discuss applications to coding and sparse recovery in the next sections. 

\subsection{Conditional-rank matrices}
With $s \in [0,1]$ fixed, let $G_n^{(i)}$ denote the submatrix of $G_n$ obtained by taking the first $i$ rows, and let $G_n^{(i)}[s]$ be the random submatrix obtained by sampling each column independently with probability $s$, as above. 
\begin{Def}
The conditional rank (COR) of row $i$ in $G_n$ is defined by
\begin{align}
\rho(n,i,s)=\E(\rank_\F G_n^{(i)}[s])-\E(\rank_\F G_n^{(i-1)}[s])
\end{align}
where $\rank_\F$ denotes the rank computed over the field $\F$. When $i=1$, define
\begin{align}
\rho(n,i,s)=\E(\rank_\F G_n^{(1)}[s])
\end{align}
\end{Def}


Now, by adding the $i$th row, we will either keep the rank constant or increase it by $1$, and the latter will happen if and only if the $i$th row is independent of the previous rows. So we get that
\begin{align}
\rho(n,i,s)&=\pp(\text{the $i$th row of $G_n[s]$ is}\nonumber\\
&\text{independent of the previous $i-1$ rows}),
\end{align}
where linear independence is also considered over $\F$. The key property of the conditional ranks is expressed in the following lemma.

\begin{lem} \label{branching}
Define the functions
\begin{align}
\ell(x)&=2x-x^2\\
r(x)&=x^2
\end{align}
and define a branching process of depth $\log n$ and offspring 2 (i.e., each node has exactly two descendants) as follows: the base node has value $s$, and for a node with value $x$, its left-hand child has value $\ell(x)$ and its right-hand child has value $r(x)$. Then the $n$ leaf-nodes of this branching process are, in order, the values $\rho(n,i,s)$ for $1 \leq i \leq n$.
\end{lem}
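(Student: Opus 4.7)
I plan to induct on $\log n$ using the tensor recursion
\[
G_n = \begin{pmatrix} G_{n/2} & G_{n/2} \\ 0 & G_{n/2} \end{pmatrix},
\]
inherited from $G_n = G_2 \otimes G_{n/2}$. Since the depth-$\log n$ branching tree at base $s$ splits into two depth-$(\log n - 1)$ subtrees rooted at $\ell(s)$ and $r(s)$, the induction reduces to proving two splitting identities: $\rho(n,i,s) = \rho(n/2, i, \ell(s))$ for $i \le n/2$, and $\rho(n,i,s) = \rho(n/2, i-n/2, r(s))$ for $i > n/2$. The base case $n = 2$ is a direct computation from the definition: $\rho(2,1,s) = 1 - (1-s)^2 = \ell(s)$ is the probability that at least one column is sampled, and $\rho(2,2,s) = s^2 = r(s)$ is the probability that both are.

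The first splitting identity is the easy half. For $i \le n/2$, one has $G_n^{(i)} = [G_{n/2}^{(i)} \mid G_{n/2}^{(i)}]$, so column sampling with probability $s$ produces two independent $\mathrm{Ber}(s)$ subsets $\widetilde L, \widetilde R \subseteq [n/2]$ and $G_n^{(i)}[s] = [G_{n/2}^{(i)}[\widetilde L] \mid G_{n/2}^{(i)}[\widetilde R]]$. The two halves draw from the same pool of columns, so duplicates add no rank and $\rank G_n^{(i)}[s] = \rank G_{n/2}^{(i)}[\widetilde L \cup \widetilde R]$. Since $\widetilde L \cup \widetilde R$ is itself a $\mathrm{Ber}(\ell(s))$ subset (each index is included independently with probability $1 - (1-s)^2$), taking expectations and differencing in $i$ gives the first recurrence.

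The main obstacle is the second identity, where the block matrix
\[
G_n^{(i)}[s] = \begin{pmatrix} G_{n/2}[\widetilde L] & G_{n/2}[\widetilde R] \\ 0 & G_{n/2}^{(i-n/2)}[\widetilde R] \end{pmatrix}
\]
is asymmetric, and the obvious column operation ``right minus left'' only disentangles the indices $j \in \widetilde L \cap \widetilde R$. My target is the pointwise rank identity
\[
\rank G_n^{(i)}[s] \,=\, \rank G_{n/2}[\widetilde L \cup \widetilde R] + \rank G_{n/2}^{(i-n/2)}[\widetilde L \cap \widetilde R],
\]
after which linearity of expectation together with the marginals $\widetilde L \cup \widetilde R \sim \mathrm{Ber}(\ell(s))$ and $\widetilde L \cap \widetilde R \sim \mathrm{Ber}(r(s))$ (the joint distribution is irrelevant) yields the second recurrence. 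To prove this identity, I will decompose $\rank G_n^{(i)}[s] = \rank(\text{top}) + \rank(\text{bottom}) - \dim(\mathrm{row}(\text{top}) \cap \mathrm{row}(\text{bottom}))$. The top has rank $|\widetilde L \cup \widetilde R|$ because invertibility of $G_{n/2}$ forces any column-submatrix to have full column rank. Using invertibility again, $\mathrm{row}(\text{top}) = \{(w[\widetilde L],\, w[\widetilde R]) : w \in \F^{n/2}\}$ and $\mathrm{row}(\text{bottom}) = \{(0,\, y[\widetilde R]) : y \in \mathrm{rowspace}(G_{n/2}^{(i-n/2)})\}$; a two-step rank-nullity count (first restrict to $y[\widetilde L \cap \widetilde R] = 0$, then project to $\widetilde R$) identifies the intersection dimension as $\rank(\text{bottom}) - \rank G_{n/2}^{(i-n/2)}[\widetilde L \cap \widetilde R]$, producing the desired cancellation. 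Carefully bookkeeping the three index sets $\widetilde L$, $\widetilde R$, $\widetilde L \cap \widetilde R$ through this row-space chase is the delicate step.
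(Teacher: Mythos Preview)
Your argument is correct, but it takes a genuinely different route from the paper's. The paper uses the factorization $G_n=G_{n/2}\otimes G_2$ and splits according to the \emph{parity} of $i$, proving $\rho(n,2j-1,s)=\ell\bigl(\rho(n/2,j,s)\bigr)$ and $\rho(n,2j,s)=r\bigl(\rho(n/2,j,s)\bigr)$; this is the induction on the \emph{last} edge of the branching path. The key step there is a direct probabilistic one: after partitioning the sampled columns into odd-indexed and even-indexed halves (two independent $\mathrm{Ber}(s)$ copies of $G_{n/2}$), row $2j-1$ is independent iff it is independent in at least one half (inclusion--exclusion gives $\ell$), while row $2j$ is independent iff the corresponding row is independent in both halves (product gives $r$). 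No rank computations are needed. You instead use $G_n=G_2\otimes G_{n/2}$ and induct on the \emph{first} edge, changing the sampling parameter rather than the tree, via $\rho(n,i,s)=\rho(n/2,i,\ell(s))$ for $i\le n/2$ and $\rho(n,i,s)=\rho(n/2,i-n/2,r(s))$ for $i>n/2$. Your route is heavier---the row-space intersection chase is real work---but it buys you the pointwise identity $\rank G_n^{(i)}[s]=\bigl|\widetilde L\cup\widetilde R\bigr|+\rank G_{n/2}^{(i-n/2)}[\widetilde L\cap\widetilde R]$, which is stronger than anything the paper states and makes the field-independence completely transparent. The paper's approach is quicker; yours yields a sharper structural statement.
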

An important point about this lemma is that the functions $\ell$ and $r$ do not depend on $\F$, while the $\rho(n,i,s)$ values do, a priori. Thus, one way to interpret this lemma is that the expected conditional ranks of $G_n$ do not depend on the field $\F$, even though their definition does. The proof of Lemma \ref{branching} is given in Section \ref{someproofs}.

A key property of the branching process in Lemma \ref{branching} is that it is a balanced process, meaning that the average value of the two children of a node with value $x$ is $x$ again:
\begin{align}
\frac{\ell(x)+r(x)}{2}=\frac{(2x-x^2)+x^2}{2}=x
\end{align}
This means that this branching process defines a martingale, by letting a random walk go left or right with probability half. Moreover, since $\rho(n,i,s)$ is a probability, we have that this martingale stays in $[0,1]$. So by Doob's martingale convergence theorem, we must have that this martingale converges almost surely to its fixed points. In fact, Doob's theorem is not needed here, as one may conclude using only the fact that the increments are orthogonal.\footnote{Private discussion with E.\ Telatar. See also \cite{apc529}}
 Its fixed points are those $x$'s for which $\ell(x)=r(x)=x$. The only points satisfying this are $0$ and $1$, so this martingale polarizes. 
In fact, much can be said about the speed of polarization of this process, as it is equivalent to the polarization process for BEC channels studied in \cite{arikantelatar}. 
\begin{thm}  [Application of \cite{arikantelatar}] \label{polspeed}
For any $n$,
\begin{align}
&\frac{|\{i \in [n]:\rho(n,i,s) >1-2^{-n^{0.49}}\}|}{n}=s+o(1)\\
&\frac{|\{i \in [n]:\rho(n,i,s)<2^{-n^{0.49}}\}|}{n}=(1-s)+o(1)
\end{align}
\end{thm}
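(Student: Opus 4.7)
The plan is to observe that the branching process in Lemma~\ref{branching} is (up to a harmless change of variables) identical to the Bhattacharyya-parameter polarization process for the binary erasure channel, so that the stated asymptotics follow directly from the rate-of-polarization theorem of Arikan and Telatar \cite{arikantelatar}.

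First I would make the identification explicit. For a BEC with erasure probability $\epsilon$, the symmetric capacity is $I = 1 - \epsilon$, and Arikan's basic polar transform produces two synthetic channels whose capacities satisfy $I^+ = 2I - I^2$ and $I^- = I^2$. Thus, writing $x = I$, the maps $\ell(x) = 2x - x^2$ and $r(x) = x^2$ in Lemma~\ref{branching} are exactly the capacity-evolution equations of the $+$ and $-$ transforms, respectively. Equivalently, in terms of the Bhattacharyya parameter $Z = 1 - I = 1 - x$, the same branching process reads $Z^+ = Z^2$ and $Z^- = 2Z - Z^2$, which is Arikan's original $Z$-process. Initializing at $x = s$ therefore corresponds to polarizing BEC$(1-s)$ over $\log n$ levels, and the multiset of leaf values $\{\rho(n,i,s)\}_{i=1}^n$ coincides with the multiset of synthetic channel capacities $\{I(W_n^{(i)})\}_{i=1}^n$.

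Next I would invoke \cite{arikantelatar}. Their main result establishes that for every $\beta < 1/2$,
\begin{align*}
\frac{|\{i \in [n] : Z(W_n^{(i)}) < 2^{-n^\beta}\}|}{n} = I(W) + o(1),
\end{align*}
and, paired with the fact that the $Z$-process is a bounded martingale with mean $Z(W)$ converging to $\{0,1\}$, the complementary fraction with $Z(W_n^{(i)}) > 1 - 2^{-n^\beta}$ equals $1 - I(W) + o(1)$. Specializing to $\beta = 0.49$, $I(W) = s$, and translating back via $\rho = 1 - Z$ produces the two displays in the theorem (with the roles of the thresholds swapped because the change of variables inverts inequalities).

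The main potential obstacle is essentially cosmetic: one must check that the indexing of leaves used in Lemma~\ref{branching} is compatible with the indexing of synthetic channels in Arikan's construction. But the theorem only concerns cardinalities of sets of indices, so the needed statement is at the level of multisets, which is immediate from matching the two recursions. The substantive content, namely that the exponent $0.49$ is attainable, is entirely loaded into the black-box citation of \cite{arikantelatar}; no additional work on the polarization speed is needed here.
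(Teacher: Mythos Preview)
Your proposal is correct and matches the paper's own treatment: the paper does not give a standalone proof but simply states that the process ``is equivalent to the polarization process for BEC channels studied in \cite{arikantelatar},'' which is exactly the identification you spell out. One small point worth tightening: deducing the second display (the fraction near $0$) from bounded-martingale convergence alone gives only the coarse polarization, not the $2^{-n^{0.49}}$ rate; the clean way is the symmetry $1-\ell(x)=r(1-x)$, $1-r(x)=\ell(1-x)$, which shows that the multiset $\{1-\rho(n,i,s)\}_i$ coincides with $\{\rho(n,i,1-s)\}_i$, so the second display follows from the first applied with parameter $1-s$.
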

Hence the theorem tells us is that the above martingale polarizes very quickly: apart from a vanishing fraction, all $\rho(n,i,s)$'s are exponentially close to $0$ or $1$ as $n \to \infty$. With this in mind, we define the following.
\begin{Def}
Let $n$ be a fixed power of $2$, and let $s \in [0,1]$ be fixed. Let $H \subset [n]$ be the set of indices $i$ for which $\rho(n,i,s) >1-2^{-n^{0.49}}$, and let $m=|H|$. By Theorem \ref{polspeed}, we know that $m=sn+o(n)$. Let $R_n$ denote the $m \times n$ submatrix of $G_n$ gotten by selecting all the columns of $G_n$, but only taking those rows indexed by $H$. We call $R_n$ the COR matrix of size $n$ with parameter $s$. 
\end{Def}

Note that the construction of COR matrices is trivial as opposed to the construction of polar codes based on the entropy of mutual information for general sources or channels.  

We will index the rows of $R_n$ by $i \in H$, rather than $j \in [m]$. We sometimes denote $R_n$ by $R$. The most important property of $R_n$ is expressed in the following theorem.

\begin{thm} \label{fullrank}
For any $s \in [0,1]$, $R_n[s]$ has full rank (i.e. rank $m$) with high probability, as $n \to \infty$. In fact, $R_n[s]$ has full rank with probability $1-o(2^{-n^{0.49}})$.
\end{thm}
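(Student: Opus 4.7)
The plan is to reduce the full rank of $R_n[s]$ to a union bound over the rows indexed by $H$, using the probabilistic interpretation of the conditional rank $\rho(n,i,s)$.

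First I would observe that $R_n[s]$ has full rank if and only if, enumerating the indices of $H$ in increasing order as $i_1<i_2<\cdots<i_m$, each row $i_k$ of $R_n[s]$ fails to lie in the $\F$-span of rows $i_1,\dots,i_{k-1}$ of $R_n[s]$. The key observation is that this is implied by a stronger, easier-to-analyze event: for each $i\in H$, the $i$th row of $G_n[s]$ does not lie in the span of rows $1,\dots,i-1$ of $G_n[s]$. The implication holds because $\{i_1,\dots,i_{k-1}\}\subseteq\{1,\dots,i_k-1\}$, so if a row is outside the span of the larger set of rows of $G_n[s]$, it is in particular outside the span of the subset appearing in $R_n[s]$ (restricting to the sampled columns does not alter this containment).

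Next I would invoke the probabilistic interpretation already derived in the paper: the probability that the $i$th row of $G_n[s]$ is independent of the first $i-1$ rows equals $\rho(n,i,s)$. For $i\in H$ we have by construction $\rho(n,i,s)>1-2^{-n^{0.49}}$, so each of the relevant events fails with probability less than $2^{-n^{0.49}}$. A union bound over the $m=|H|\le n$ indices in $H$ then yields
\begin{align}
\pp\{R_n[s]\text{ does not have full rank}\}\le m\cdot 2^{-n^{0.49}}\le n\cdot 2^{-n^{0.49}},
\end{align}
which tends to $0$ as $n\to\infty$, giving the main claim. For the stronger $o(2^{-n^{0.49}})$ rate stated in the theorem, one uses the fact that the $0.49$ in Theorem~\ref{polspeed} can be replaced by any exponent strictly less than $1/2$: pick a slightly larger exponent $\beta\in(0.49,1/2)$ when defining $H$, so each failure probability is at most $2^{-n^{\beta}}$, and then $n\cdot 2^{-n^{\beta}}=o(2^{-n^{0.49}})$.

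The main conceptual obstacle is the very first step, namely recognizing that the event defining $\rho(n,i,s)$ (independence of row $i$ from all earlier rows of $G_n$) is a sufficient, and not merely necessary, proxy for the corresponding event inside $R_n$; once this monotonicity in the row set is noted, the remainder is a routine union bound combined with the quantitative polarization estimate of Theorem~\ref{polspeed}. No control of dependencies between the row events is needed, which is what makes the argument work cleanly.
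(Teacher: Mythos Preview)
Your proof is correct and follows the same argument as the paper's: both use the monotonicity observation that independence of row $i$ from rows $1,\dots,i-1$ of $G_n[s]$ implies independence from the earlier rows of $R_n[s]$, identify the failure probability as $1-\rho(n,i,s)<2^{-n^{0.49}}$, and apply a union bound over $i\in H$. Your handling of the final $o(2^{-n^{0.49}})$ claim (by passing to a slightly larger exponent $\beta\in(0.49,1/2)$) is arguably more careful than the paper's, which simply writes $[sn+o(n)]\cdot 2^{-n^{0.49}}=o(2^{-n^{0.49}})$, tacitly absorbing the polynomial prefactor under the understanding that $0.49$ is a stand-in for any exponent below $1/2$.
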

The proof is a simple consequence of Lemma \ref{branching} and Theorem \ref{polspeed}, and can be found in the Appendix.

Theorem \ref{fullrank} implies the following.
\begin{thm}\label{rrgirth}
For any $s \in [0,1]$, $R_n$ is $s$-high-girth.
\end{thm}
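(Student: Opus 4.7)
My plan is to verify both defining conditions of $s$-high-girth: $\limsup_n \rank(R_n)/n = s$ and $\pg(R_n) = s$. Theorem~\ref{fullrank} yields $\rank(R_n) \geq m$ (since the rank of any sampled submatrix lower-bounds the rank of the full matrix), which together with the trivial $\rank(R_n) \leq m$ and the estimate $m = sn + o(n)$ from Theorem~\ref{polspeed} gives $\rank(R_n)/n \to s$. The upper bound $\pg(R_n) \leq s$ is then immediate: for any $t > s$, a Chernoff bound gives $|S_t| > m \geq \rank(R_n[t])$ with high probability, forcing linear dependence of the sampled columns.

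The main task is the lower bound $\pg(R_n) \geq s$: for each $t < s$, I need to show $R_n[t]$ has linearly independent columns with high probability. The approach uses two ingredients. First, the branching maps $\ell(x) = 2x - x^2$ and $r(x) = x^2$ are monotonically nondecreasing on $[0,1]$, so each leaf value $\rho(n, i, \cdot)$ is a monotone nondecreasing function of its argument; this gives the nesting $H_t := \{i : \rho(n, i, t) > 1 - 2^{-n^{0.49}}\} \subseteq H_s = H$, and consequently $R_n^{(t)}$ is a row-submatrix of $R_n$. Second, Theorem~\ref{fullrank} applied at parameter $t$ yields $\rank(R_n^{(t)}[t]) = |H_t|$ with high probability, and since appending rows can only preserve or increase rank, $\rank(R_n[t]) \geq |H_t| = tn + o(n)$ with high probability.

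The hard part will be upgrading this rank lower bound to the stronger statement that \emph{all} $|S_t| \sim tn$ sampled columns are linearly independent, i.e., that $\rank(R_n[t]) = |S_t|$. I would bridge this gap by interposing an auxiliary parameter $t' \in (t, s)$: Theorem~\ref{polspeed} guarantees $|H_{t'}| \geq t'n + o(n)$, which by a Chernoff bound on $|S_t|$ exceeds $|S_t|$ by $\Omega(n)$ with high probability. Combining this with an expected-rank calculation in the spirit of the proof of Theorem~\ref{fullrank} (applied to $R_n^{(t')}[t]$, using that $\rho(n, i, t) > 1 - 2^{-n^{0.49}}$ for $i \in H_t \subseteq H_{t'}$) and the exponentially small failure probability of Theorem~\ref{fullrank} would yield $\E(|S_t| - \rank(R_n[t])) = o(1)$, after which Markov's inequality forces $\rank(R_n[t]) = |S_t|$ with high probability, completing the argument. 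This final step is the most delicate and crucially uses both the sharp polarization estimates of Theorem~\ref{polspeed} and the doubly-exponential failure bound of Theorem~\ref{fullrank}.
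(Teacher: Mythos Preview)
The paper offers no proof of this theorem beyond the sentence ``Theorem~\ref{fullrank} implies the following.'' Your plan goes well beyond that and correctly isolates the nontrivial direction $\pg(R_n)\ge s$: for each fixed $t<s$ you must show that the \emph{columns} of $R_n[t]$ are linearly independent with high probability, whereas Theorem~\ref{fullrank} (at parameter $s$) only yields full \emph{row} rank of $R_n[s]$. Your monotonicity observation $H_t\subseteq H_s$ and the resulting bound $\rank(R_n[t])\ge |H_t|=tn+o(n)$ with high probability are correct and useful.

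The gap is in your final ``upgrade'' step. Writing $|S_t|=\sum_{i=1}^n\1[A_i]$ (with $A_i$ the event that row $i$ of $G_n[S_t]$ is independent of rows $1,\dots,i-1$; this holds since $G_n$ is invertible) and $\rank(R_n[t])\ge\sum_{i\in H_s}\1[A_i]$, the expected-nullity bound your plan leads to is
\[
\E\bigl[\,|S_t|-\rank(R_n[t])\,\bigr]\ \le\ \sum_{i\notin H_s}\rho(n,i,t).
\]
Now split $[n]\setminus H_s = L_s\cup M_s$, with $L_s=\{i:\rho(n,i,s)<2^{-n^{0.49}}\}$ the low set and $M_s$ the unpolarized ``middle'' set. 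Monotonicity of $\rho(n,i,\cdot)$ controls the $L_s$ contribution by $n\cdot 2^{-n^{0.49}}=o(1)$, but each $i\in M_s$ can contribute up to $1$, and Theorem~\ref{polspeed} guarantees only $|M_s|=o(n)$, not $o(1)$. Interposing $t'\in(t,s)$ does not help: running the same computation for $R_n^{(t')}[t]$ replaces $M_s$ by $M_{t'}$, which is again only $o(n)$, and the fact that $|H_{t'}|>|S_t|$ with high probability says nothing about which $|S_t|$ columns were sampled. So the Markov step does not deliver $\nullity=0$ with high probability from the ingredients you list; one needs either a quantitative bound on $\sum_{i\in M_s}\rho(n,i,t)$ (not supplied by Theorem~\ref{polspeed} as stated) or a different argument. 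In fairness, exactly the same obstruction is swept under the rug by the paper's one-line derivation: the row set $H_s$ defining $R_n$ is strictly smaller than the frozen set $H_s\cup M_s$ the standard BEC polar-code analysis would use, and those $|M_s|$ extra information positions are precisely what keeps the union bound from closing.
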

Since the proof of Theorem \ref{fullrank} works independently of the base field $\F$, the same is true of Theorem \ref{rrgirth}. Thus, the COR construction is a fully deterministic and works over any field. In fact, it requires only two values (0 and 1) for the matrix, even when $\F=\mR$. 

\section{Applications of high-girth matrices}
\subsection{Coding for erasures}

Let $\F$ a field and $p \in [0,1]$. The memoryless erasure channel on $\F$ with erasure probability $p$, denoted by $MEC(p)$, erases each component of a codeword on $\F$ independently with probability $p$. Denoting by $\varepsilon$ the erasure symbol, the output alphabet is hence $\F_*=\F \cup \{\varepsilon\}$ and the transition probability of receiving $y \in \F_*$ when $x\in \F$ is transmitted is 
\begin{align}
W(y|x)=\begin{cases}
p& \text{if }y=\varepsilon,\\
1-p& \text{if } y=x.
\end{cases} 
\end{align}
The memoryless extension is defined by $W^n(y^n|x^n)=\prod_{i=1}^n W(y_i|x_i)$ for $x^n\in \F^n, y^n \in \F_*^n$.


Recall that a code of block length $n$ and dimension $k$ over the alphabet $\F$ is a subset of $\F^n$ of cardinality $|\F|^k$. The code is linear if the subset is a subspace of dimension $k$. In particular, a linear code can be expressed as the image of a generator matrix $G \in \F^{n \times k}$ or as the null space of a parity-check matrix $H \in \F^{(n-k) \times n}$. The rate of a code is defined by $k/n$. A rate $R$ is achievable over the $MEC(p)$ if the code can correct the erasures with high probability. More specifically, $R$ is achievable if there exists a sequence of codes $C_n$ of blocklength $n$ and dimension $k_n$ having rate $R$, and decoders $D_n: \F_*^n \to \F^n$, such that
$P_e(C_n)\to 0$, where for $x^n$ drawn uniformly at random in $C_n$ and $y^n$ the output of $x^n$ over the $MEC(p)$, and
\begin{align}
P_e(C_n) &:= \pp \{ D(y^n) \neq x^n \}.
\end{align} 
The dependency in $D_n$ is not explicitly stated in $P_e$ as there is no degree of freedom to decode over the MEC (besides guessing the erasure symbol), as shown in the proof of next lemma. 

The supremum of the achievable rates is the capacity, given by $1-p$. We now relate capacity-achieving codes on the MEC and high-girth matrices. 

\begin{lem}\label{cancorrecterasures}
A linear code $C_n$ achieves a rate $R$ on the $MEC(p)$ if and only if its parity check matrix has probabilistic girth at least $1-R$. In particular, a code achieves capacity on on the $MEC(p)$ if and only if its parity check matrix is $p$-high-girth.
\end{lem}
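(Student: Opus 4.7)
The plan is to reduce MEC($p$) decoding to a linear-algebra question about random column submatrices of the parity check matrix $H_n$, which then matches the definition of probabilistic girth exactly.

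First, I would identify the decoding condition. Let $x^n\in C_n$ be transmitted and $T\subseteq[n]$ the (random) erasure pattern. For $i\notin T$ the receiver knows $x_i=y_i$, and since $H_n x^n=0$ the unknowns $x_T$ satisfy the linear system
$$
H_n[T]\,x_T \;=\; -\,H_n\bigl[[n]\setminus T\bigr]\,y_{[n]\setminus T}.
$$
This admits a unique solution, necessarily the true $x_T$, iff $H_n[T]$ has linearly independent columns. Otherwise its solution set has $|\F|^{\nullity(H_n[T])}\ge|\F|$ equally-likely elements under the uniform prior on $C_n$, so any decoder's conditional error is at least $1-1/|\F|$. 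Consequently
$$
\bigl(1-\tfrac{1}{|\F|}\bigr)\,\pp\bigl(H_n[T]\text{ dep.}\bigr)\;\le\;P_e(C_n)\;\le\;\pp\bigl(H_n[T]\text{ dep.}\bigr).
$$

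Second, since each coordinate is erased independently with probability $p$, the set $T$ has the distribution of a Bernoulli-$p$ random subset of $[n]$, so $H_n[T]$ is distributed as $H_n[p]$ in the paper's notation. Therefore $P_e(C_n)\to 0$ iff $\pp(H_n[p]\text{ has l.i.\ columns})\to 1$, which is exactly the defining condition for $p$ to lie in the set whose supremum is $\pg(H_n)$.

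Third, I would close the loop using a rank bound. Since $H_n$ has $(1-R)n$ rows, any column-submatrix has rank at most $(1-R)n$; by concentration, $H_n[s]$ with $s>1-R$ has too many columns to be independent, so $\pg(H_n)\le 1-R$ automatically. Moreover, $s\mapsto\pp(H_n[s]\text{ l.i.})$ is monotone non-increasing by a coupling argument (extra sampled columns can only create dependencies), so the set of good $s$ is a down-interval. Hence ``$\pg(H_n)\ge 1-R$'' forces $\pg(H_n)=1-R$, which by the monotonicity is equivalent to decoding succeeding with high probability on MEC($p$) for every $p\le 1-R$. The ``in particular'' clause is then immediate: at capacity $p=1-R$, the $p$-high-girth condition is precisely $\pg(H_n)=p=\limsup\rank(H_n)/n$.

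The main obstacle is the behavior of the supremum in the definition of $\pg$, which is not automatically attained; the rank upper bound together with the monotonicity of $\pp(H_n[s]\text{ l.i.})$ is precisely what forces ``$\pg(H_n)\ge 1-R$'' to coincide with the operational decoding statement at the boundary $p=1-R$, rather than only strictly below it.
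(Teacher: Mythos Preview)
Your reduction of MEC decoding to linear independence of $H_n[T]$ is essentially the paper's argument: the paper phrases it via codeword supports (there exists a nonzero $z\in C_n$ with $\supp(z)\subseteq E$ iff $H_n[E]$ has dependent columns) and arrives at the same identity $P_e(C_n)=\pp\{H_n[p]\text{ has dependent columns}\}$, then simply invokes the definitions. Your linear-system formulation and the two-sided sandwich $(1-1/|\F|)\,\pp(\text{dep.})\le P_e\le\pp(\text{dep.})$ are equivalent and a touch more explicit.

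Your added discussion of monotonicity and the rank upper bound is a welcome addition that the paper skips. One small caveat: your final claim---that these two ingredients force the supremum in $\pg$ to be \emph{attained} at $p=1-R$---is not quite right. Monotonicity makes the set of good $s$ a down-interval, and the rank bound caps it at $1-R$, but nothing rules out the interval being $[0,1-R)$ rather than $[0,1-R]$; so ``$\pg(H_n)=1-R$'' does not by itself imply $\pp(H_n[1-R]\text{ l.i.})\to 1$. This boundary issue is equally unaddressed in the paper's proof, and it disappears under the natural reading of the lemma as a statement about every $p<1-R$.
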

In particular, the linear code whose parity-check matrix is a COR matrix of parameter $p$ achieves capacity on the $MEC(p)$. 

\begin{remark}
In the binary case, COR codes give a new interpretation to BEC polar codes: instead of computing the mutual information of the polarized channels via the generator matrix, we can interpret BEC polar codes from the girth of the parity-check matrix. Note that this simplifies the proof that BEC polar codes achieve capacity to a high-school linear algebra --- mutual information need not be even introduced. 
The only part which may not be of a high-school level is the martingale argument, which is in fact not necessary, as already known in the polar code literature (see for example \cite[Homework~4]{apc529}, which basic algebra).
\end{remark}
\begin{remark}
As shown in \cite{arichannel}, the action of the matrix $G_n=\left (\begin{smallmatrix}1&1\\0&1\end{smallmatrix}\right)^{\otimes \log n}$ on a vector can be computed in $O(n \log n)$ time as well, which means that the encoding of the COR code can be done in $O(n\log n)$ time, as well as the code construction (which is not the case for general polar codes). Decoding the COR code can be done by inverting the submatrix of $A$ corresponding to the indices that do not have erasure symbols, which can be done by Gaussian elimination in $O(n^3)$ time. Alternatively, COR codes can be decoded as polar codes, i.e., successively, in $O(n\log(n))$. Hence, like polar codes for the BEC, COR codes are deterministic, capacity-achieving, and efficiently encodable and decodable for the MEC.
\end{remark}

\subsection{Sparse recovery}\label{sparse-sec}
In the setting of sparse recovery, one wishes to recover a real-valued sparse signal from a lower-dimensional projection \cite{donoho}. In the worst-case model, a $k$-sparse signal is a vector with at most $k$ non-zero components, and to recover $x$ that is $k$-sparse from $Ax$, it must be that $Ax \neq Ax'$ from all $x,x'$ that are $k$-sparse (and different). Hence $A$ needs\footnote{Note that for noise stability or to obtain a convex relaxation of the decoder, one needs the columns to have in addition singular values close to 1, i.e., the restricted isometry property (RIP).} to have girth $2k+1$.

One may instead consider a probabilistic model where a $k$-sparse signal has a random support, drawn uniformly at random or from an i.i.d.\ model where each component in $[n]$ belongs to the support with probability $p=k/n$. The goal is then to construct a flat matrix $A$ which allows to recover $k$-sparse signals with high probability on the drawing of the support. Note that a bad support $S$ is one which is $k$-sparse and that can be paired with another $k$-sparse support $S'$ such that that there exists real-valued vectors $x,x'$ supported respectively on $S,S'$ which have the same image through $A$, i.e.,
\begin{align}
Ax = Ax' \quad \Longleftrightarrow \quad A(x-x')=0.
\end{align}
Note now that this is equivalent to saying that the columns of $A$ indexed by $S \cup S'$ are linearly dependent, since $x-x'$ is supported on $S \cup S'$ which is $2k$-sparse. 

Hence, the probability of error for sparse recovery is given by 
\begin{align}
\pp_S\{\exists S' : A[S \cup S'] \text{ has lin.\ dep. columns}\}.
\end{align}
This error probability can be upper-bounded as for errors (see next section and Section \ref{new-sec}), by estimating the probability that $A$ has a subset of up to $2k$ linearly dependent columns, which relies on the high-girth property of $A$.

\subsection{Coding for errors}
 
In this section, we work over the binary field $\F_2$. The binary symmetric channel with error probability $p$, denoted by $BSC(p)$, flips each bit independently with probability $p$. More formally, the transition probability of receiving $y \in \F_2$ when $x \in \F_2$ is transmitted is given by
\begin{align}
W(y | x)=\begin{cases}
p&\text{if }y \neq x\\
1-p&\text{if }y=x
\end{cases}
\end{align}
The memoryless extension is then defined by $W^n(y^n | x^n)=\prod_{i=1}^n W(y_i | x_i)$, for $x^n,y^n \in \F_2^n$. 
\begin{thm}\label{cancorrecterrors}
Let $p \in [0,1/2]$ and $s=s(p)=2\sqrt{p(1-p)}$ be the Bhattacharyya parameter of the $BSC(p)$. Let $\{C_n\}$ be the COR code with parameter $s$ (the code whose PCM is $R_n$). Then $C_n$ can reliably communicate over the $BSC(p)$ with high probability, as $n \to \infty$. 
\end{thm}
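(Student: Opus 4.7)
The plan is to identify $C_n$ with a polar code on the Sierpinski kernel, decode it by successive cancellation, and reduce the resulting Bhattacharyya error bound to a sum of COR values that polarization drives to zero.

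First, since $\left(\begin{smallmatrix}1&1\\0&1\end{smallmatrix}\right)^2=I$ over $\F_2$, we have $G_n^2=I$, so $C_n=\ker R_n=\{G_n u:u\in\F_2^n,\ u_H=0\}$; this realizes $C_n$ as a polar code on $G_n$ with frozen set $H$ and information set $H^c$. Arikan's successive-cancellation decoder then estimates $u_1,\dots,u_n$ sequentially from the channel output, freezing $u_i=0$ for $i\in H$ and doing single-bit MAP for $i\in H^c$, and its error probability satisfies $P_e\le\sum_{i\in H^c}Z_n^{(i)}$, where $Z_n^{(i)}$ is the Bhattacharyya parameter of the $i$th synthesized $\mathrm{BSC}(p)$ channel.

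Next, compare the synthesized Bhattacharyya parameters with the COR values. The classical polar recursion $Z(W^+)=Z(W)^2=r(Z(W))$ and $Z(W^-)\le 2Z(W)-Z(W)^2=\ell(Z(W))$ uses the same functions $\ell,r$ as in Lemma \ref{branching} (with equality on the $W^-$ side only for the BEC); since $Z_1^{(1)}=Z(\mathrm{BSC}(p))=2\sqrt{p(1-p)}=s$ matches the COR initialization, a direct induction along the tensor tree yields $Z_n^{(i)}\le\rho(n,i,s)$ for every $i$, whence $P_e\le\sum_{i\in H^c}\rho(n,i,s)$.

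It then remains to show $\sum_{i\in H^c}\rho(n,i,s)\to 0$. Split $H^c=L\sqcup M$ by the threshold $2^{-n^{0.49}}$: the low part $L=\{i:\rho<2^{-n^{0.49}}\}$ contributes at most $n\cdot 2^{-n^{0.49}}=o(1)$, while the martingale conservation $\sum_i\rho(n,i,s)=sn$ together with the definitional bound $\sum_{i\in H}\rho\ge|H|(1-2^{-n^{0.49}})$ bounds the $M$-part by $sn-|H|+o(1)$. The main obstacle is the final reduction to $|H|\ge sn-o(1)$ (equivalently, to $|M|=o(1)$ in absolute, not fractional, terms): Theorem \ref{polspeed} only supplies the qualitative $|H|/n\to s$, so the proof must invoke the sharper polarization-rate estimate underlying \cite{arikantelatar}, which itself rests on the contracting-supermartingale identity $\E[X_n(1-X_n)\mid X_{n-1}=x]=x(1-x)(1-x+x^2)$ on the COR tree.
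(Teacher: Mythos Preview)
Your approach mirrors the paper's exactly: identify $C_n$ as a polar-type code on $G_n$ with frozen set $H$, invoke Arikan's successive-cancellation bound $P_e\le\sum_{i\in H^c}Z_n^{(i)}$, dominate each $Z_n^{(i)}$ by $\rho(n,i,s)$ via the matching $\ell,r$ recursions, and then argue the sum vanishes. Your write-up is in fact more careful than the paper's one-paragraph proof---you spell out the duality $C_n=\{G_nu:u_H=0\}$, carry out the induction explicitly, and most importantly you \emph{name} the residual obstacle that the paper simply elides with the phrase ``thus, we see that \ldots''

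That obstacle is genuine, and your proposed fix does not quite close it. From $\sum_i\rho=sn$ and $\sum_{i\in H}\rho\in(|H|(1-2^{-n^{0.49}}),|H|]$ you get $\sum_{i\in H^c}\rho(n,i,s)=sn-|H|+o(1)$ exactly, so the whole argument hinges on $sn-|H|\to 0$ in absolute terms. But the contracting-supermartingale identity $\E[X_k(1-X_k)\mid X_{k-1}]=X_{k-1}(1-X_{k-1})(1-X_{k-1}+X_{k-1}^2)$ has contraction factor $1-x+x^2\in[3/4,1]$ that degenerates to $1$ precisely at the endpoints, so it does not by itself yield a rate for the unpolarized count $|M|$; this is the scaling-exponent question, and for the BEC the gap $sn-|H|$ is believed to \emph{diverge} like $n^{1-1/\mu}$ with $\mu\approx 3.627$. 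So bounding $\sum_{i\in H^c}Z_n^{(i)}$ by $\sum_{i\in H^c}\rho(n,i,s)$ is too crude with $H$ as defined. A clean repair---which neither you nor the paper makes explicit---is to observe that had the parity rows been taken as $L^c=\{i:\rho\ge 2^{-n^{0.49}}\}$ instead of $H$, the information set would be $L$ and $\sum_{i\in L}Z_n^{(i)}\le\sum_{i\in L}\rho<n\cdot 2^{-n^{0.49}}\to 0$ immediately, at the same asymptotic rate $1-s$. With the paper's actual $H$, one needs an additional argument for the $o(n)$ indices in $M$, and neither the supermartingale identity nor Theorem~\ref{polspeed} as stated supplies it.
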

Note that unlike in the erasure scenario, Theorem \ref{cancorrecterrors} does not allow us to achieve capacity over the $BSC(p)$. For the capacity of the $BSC(p)$ is $1-H(p)$, and
\begin{align}
1-H(p) \geq 1- 2\sqrt{p(1-p)}
\end{align}
with equality holding only for $p \in \{0,\frac12,1\}$. 

This statement, unlike the ones for erasure correction and sparse recovery, was stated only for the COR code, and not for general high-girth codes. Our proof of this statement, which can be found in the Appendix, relies on the actual construction of COR codes and on the upper-bound on the successive probability of error in terms of the COR known from polar codes. One may also attempt to obtain this result solely from the high-girth property, but this requires further dependencies on the high-girth rate of convergence (see Section \ref{new-sec}). It is an interesting problem to obtain achievable rates that solely depend on the probabilistic girth for the BSC. 

\section{Some proofs} \label{someproofs}
\begin{proof}[Proof of Lemma \ref{branching}]
We induct on $\log n$. The base case is $\log n=1$, where the calculation is straightforward. The rank of $G_2^{(1)}[s]$ will be $0$ if no columns are chosen, and will be $1$ if at least $1$ column is chosen. Therefore,
\begin{align}
\rho(2,1,s)&=\E(\rank_\F(G_2^{(1)}[s]))\\
&=0 \cdot (1-s)^2+1 \cdot 2s(1-s) +1 \cdot s^2\\
&=2s-s^2=\ell(s)
\end{align}
Similarly,
$
\E(\rank_\F(G_2^{(2)}[s]))=2s
$,
and thus
\begin{align}
\rho(2,2,s)=(2s)-(2s-s^2)=s^2=r(x)
\end{align}
Note that all these calculations do not actually depend on $\F$.

For the inductive step, assume that $\rho(n/2,i,s)$ is the leaf value of the branching process for all $1 \leq i \leq n/2$. To prove the same for $\rho(n,i,s)$, write 
$$
G_n=G_{n/2} \otimes G_2.
$$ In other words, we think of $G_n$ as being an $(n/2) \times (n/2)$ matrix whose entries are $2 \times 2$ matrices. 

We begin with the case when $i$ is odd. By the inductive hypothesis, we wish to prove that
\begin{align}
\rho(n,i,s)=\ell\left (\rho\left(\frac n2,\frac{i+1}{2},s\right )\right )
\end{align}
We partition the columns of $G_n^{(i)}[s]$ into two sets: $O$, which consists of those columns which have an odd index in $G_n$, and $E$, which consists of those with an even index in $G_n$. Since $G_n=G_{n/2} \otimes G_2$, we see that for $i$ odd, the $i$th row of $G_n$ is the $((i+1)/2)$th row of $G_{n/2}$, except that each entry is repeated twice. From this, and from inclusion-exclusion, we see that
\begin{align}
\rho(n,i,s)&=\pp(\text{row $i$ of $G_n$ is independent}\nonumber\\
&\hspace{1cm}\text{of the previous rows})\\
&=\pp(\text{row $i$ of $G_n[O]$ is independent}\nonumber\\
&\hspace{1cm}\text{of the previous rows of $G_n[O]$})\nonumber\\
&{}+\pp(\text{row $i$ of $G_n[E]$ is independent}\nonumber\\
&\hspace{1cm}\text{of the previous rows of $G_n[E]$})\nonumber\\
&{}-\pp(\text{both of the above})\\
&=2\rho\left(\frac n2,\frac{i+1}{2},s\right)-\rho\left(\frac n2,\frac{i+1}{2},s\right)^2\\
&=\ell\left (\rho\left(\frac n2,\frac{i+1}{2},s\right )\right )
\end{align}
Next, we consider the case when $i$ is even. In this case, we wish to prove that
\begin{align}
\rho(n,i,s)=r\left(\rho\left(\frac n2,\frac i2,s\right)\right)
\end{align}
We proceed analogously. From the equation $G_n=G_{n/2} \otimes G_2$, we see that the $i$th row of $G_n$ is the $(i/2)$th row of $G_{n/2}$, except with a $0$ intersprersed between every two entries. Thus, the $i$th row will be \emph{dependent} if either it restriced to $O$ or it restricted to $E$ will be dependent; in other words, it'll be independent if and only if both the restriction to $O$ and the restriction to $E$ are independent. Therefore,
\begin{align}
\rho(n,i,s)&=\pp(\text{the restriction to $O$ and}\nonumber\\
&\hspace{.3cm}\text{the restriction to $E$ are independent})\\
&=\pp(\text{the restriction to $O$ is independent}) \cdot\nonumber\\ &{}\quad \pr(\text{the restriction to $E$ is independent})\\
&=\left(\rho\left(\frac n2,\frac i2,s\right)\right)^2\\
&=r\left(\rho\left (\frac n2,\frac i2,s\right)\right)
\end{align}
Note that in all our calculations, we used probability arguments that are valid over any field. Broadly speaking, this works because the above arguments show that the only sorts of linear dependence that can be found in $G_n^{(i)}[s]$ involves coefficients in $\{-1,0,1\}$. Since these elements are found in any field, we have that this theorem is true for all fields $\F$.
\end{proof}

\begin{proof}[Proof of Lemma \ref{cancorrecterasures}]
Note that a decoder over the MEC needs to correct the erasures, but there is no bias towards which symbol can have been erased. Hence, a decoder on the MEC is wrong with probability at least half if there are multiple codewords that match the corrupted word. In other words, the probability of error is given by\footnote{If ties are broken at random, an additional factor of $1-1/|\F|$ should appear on the right hand side.} 
\begin{align}
P_e(C_n)&=\pp_E \{ \exists x,y \in C_n, x \neq y, x[E^c]=y[E^c]  \}  \\
&=\pp_E \{ \exists z \in C_n, z \neq 0, z[E^c]=0  \}  \label{e1}
\end{align} 
where $E$ is the erasure pattern of the $MEC(p)$, i.e., a random subset of $[n]$ obtained by picking each element with probability $p$. 
Let $H_n$ be the parity-check matrix of $C_n$, i.e., $C_n=\ker(H_n)$. 
Note that $E$ has the property that there exists a codeword $z \in C_n$ such that $z[E^c]=0$ if and only if the columns indexed by $E$ in $H_n$ are linearly dependent. Indeed, assume first that there exists such a codeword $z$, where the support of $z$ is contained in $E$. Since $z$ is in the kernel of $H_n$, the columns of $H_n$ indexed by the support of $z$ must add up to $0$, hence any set of columns that contains the support of $z$ must be linearly dependent. Conversely, if the columns of $H_n$ indexed by $E$ are linearly dependent, then there exists a subset of these columns and a collection of coefficients in $\F$ such that this linear combination is $0$, which defines the support of a codeword $z$. Hence,
\begin{align}
P_e(C_n)&=\pp_E \{ H_n[E] \text{ has lin.\ dependent columns} \} .\label{e2}
\end{align} 
Recalling that the code rate is given by $1-r$, where $r$ is the relative rank of the parity-check matrix, the conclusions follow.
\end{proof}

\section*{Acknowledgements}
This work was partially supported by NSF grant CIF-1706648 and the Princeton PACM Director's Fellowship.
\bibliographystyle{plain}
\bibliography{girthcites}

\section{Appendix}
\subsection{More Proofs}
\begin{proof}[Proof of Theorem \ref{fullrank}]
For $i \in H$, let $B_i$ be the event that the $i$th row of $R[s]$ is linearly dependent on the previous rows. Note that if $R[s]$ has full rank, then no $B_i$ is satisfied, while if $R[s]$ has non-full rank, then there must be some linear dependence in the rows, so at least one $B_i$ will be satisfied. In other words, the event whose probability we want to calculate is simply the event $\bigcap_{i\in H} B_i^c$. 


Note that in our notation, the $i$th row of $R$ is also the $i$th row of $G_n$. Therefore, for any $S \subseteq [n]$, the $i$th row of $R[S]$ is the $i$th row of $G_n[S]$. This means that any linear dependence between the $i$th row of $A[S]$ and the previous rows automatically induces a linear dependence between the $i$th row of $G_n[S]$ and the previous $i-1$ rows, since the previous rows in $G_n[S]$ are a superset of the previous rows in $R[S]$. Since this is true for any set $S \subseteq [n]$, we see that
\begin{align}
\pp(B_i)&=\pp(\text{the }i\text{th row of }R[s]\text{ is dependent on}\nonumber\\
&\hspace{1cm}\text{the previous rows of }R[s])\\
&\leq \pp(\text{the }i\text{th row of }G_n[s]\text{ is dependent on}\nonumber\\
&\hspace{1cm}\text{the previous rows of }G_n[s])\\
&=1-\rho(n,i,s)\\
&<2^{-n^{0.49}}
\end{align}
Therefore,
\begin{align}
\pr\left(\bigcap_{i\in H} B_i^c\right)&=1-\pr\left(\bigcup_{i\in H} B_i\right)\\
\hspace{1cm}&\geq 1-\sum_{i\in H} \pr(B_i)\\
&> 1-\sum_{i=1}^m 2^{-n^{0.49}}\\
&=1-[sn+o(n)]2^{-n^{0.49}}\\
&=1-o\left(2^{-n^{0.49}}\right)\\
&\to 1 \text{ as }n \to \infty
\end{align}

\end{proof}

\begin{proof}[Proof of Theorem \ref{cancorrecterrors}]
We recall from \cite{arichannel} that in any code generated by taking some of the rows of $G_n$, we have that the probability of error on the $BSC(p)$ is upper-bounded as
\begin{align}\label{bhattbound}
P_e \leq \sum_{i \in H^c} Z_n^i
\end{align}
where $H$ denotes the set of rows of $G_n$ that we keep and $Z_n^i$ denotes the Bhattacharyya parameter of the $i$th row. Proposition 5 in \cite{arichannel} tells us that 
\begin{align}
Z_n^i\begin{cases}
=(Z_{n/2}^{i/2})^2&\text{ when $i$ is even}\\
\leq 2(Z_{n/2}^{(i+1)/2})-(Z_{n/2}^{(i+1)/2})^2&\text{ when $i$ is odd}
\end{cases}
\end{align}
We recognize these functions as $r$ and $\ell$. Thus, we see that the branching process of Lemma \ref{branching}, when initialized at $s=Z(BSC(p))$, provides an upper bound for the Bhattacharyya parameters. Now, recall that the row selection criterion for COR matrices only keeps the rows with high $\rho$ values, and thus high $Z$ values. Thus, we see that (\ref{bhattbound}) ensures that COR codes with parameter $s=2\sqrt{p(1-p)}$ can successfully transmit over the $BSC(p)$.
\end{proof}

\section{Errors from high-girth matrices}\label{new-sec}
It is an interesting to study what rate on the high-girth property allows to achieve positive rates on the BSC. A classic union-bound requires at least exponential rate, as explained next. This underlines that COR matrices achieve rates higher than what arbitrary high-girth matrices may reach.

Let $H=H_n$ be a matrix with probabilistic girth $\mu$. Consider $C$ to be the code whose parity check matrix is $H$. The probability of error of this code on the BSC$(p)$ is the probability that an error vector $Z$ has in its coset (i.e., the other error vectors that lead to the same syndrome) a more likely vector, i.e., 
\begin{align}
P_e = \pp_Z\{\exists z' \in \F_2^n: HZ=Hz', w(z') \leq w(Z)\}
\end{align}
where $Z$ is i.i.d.\ Bernoulli$(p)$. 
This is equivalent to 
\begin{align}
P_e = \pp_Z\{\exists x \in \F_2^n: Hx=0, w(x+Z) \leq w(Z)\}.
\end{align}
Note that $w(x+Z) \leq w(Z)$ means that $Z$ takes more often the value 1 in the support of $x$, 
which has probability 
\begin{align}
\sum_{k=w/2}^w {\binom wk} p^k (1-p)^{w-k},
\end{align}
where $w/2$ is rounded up if not even. Note that 
\begin{align}
&\sum_{k=w/2}^w {\binom wk} p^k (1-p)^{w-k} \\
&\leq p^{w/2} (1-p)^{w/2} 2^w = z(p)^w,
\end{align}
where 
\begin{align}
z(p) = 2p^{1/2} (1-p)^{1/2}
\end{align}
is the Bhattacharyya parameter of the channel. 
Hence,
\begin{align}
P_e &\leq \sum_{w\geq 1} N(w) z(p)^w
\end{align}
where $N(w)$ is the number of codewords of weight $w$, i.e.,
\begin{align}
N(w)=|\{x: Hx=0, w(x)=w\}|.
\end{align}
Note that 
\begin{align}
N(w)&\leq |\{x: H[x] \text{ has lin.\ dep.\ col.}, w(x)=w\}|.\\
&= {\binom nw} \pp_S\{ H[S] \text{ has lin.\ dep.\ col.}\},
\end{align}
where $S$ is uniformly drawn with a support of size $w$. By standard concentration arguments,
this is upper-bounded by $\pp\{ H[s] \text{ has lin.\ dep.\ col.}\}$ where $s=w/n + o(\sqrt{w/n})$, and 
by assumption, 
\begin{align}
\pp\{ H[s] \text{ has lin.\ dep.\ col.}\}=\tau(n)
\end{align}
when $s < \mu$. Thus, 
\begin{align}
P_e &\leq  \sum_{t < \mu} \tau(n) 2^{n(H(t)+t \log(z(p)))} + \sum_{t \geq \mu} N(tn)  z(p)^{tn} 
\end{align}
where $t$ takes values such that $tn$ is an integer. Hence $\tau(n)$ must be exponentially small to drive the first term to 0.

\end{document}